\documentclass[12pt]{article}
\usepackage{latexsym}
\usepackage{amsfonts}
\usepackage{amssymb}
\usepackage{amsmath}
%\usepackage{breqn}
%%%%%%%%%%%%%%%%%%%%%%%%%%%%%%%%%%%%%%%%%%%%%%%%%%%%%%%%%%%%%%%%%%%%%%%%%
%%%%%%%%%%                 This is a Latex file                %%%%%%%%%%
%%%%%%%%%%%%%%%%%%%%%%%%%%%%%%%%%%%%%%%%%%%%%%%%%%%%%%%%%%%%%%%%%%%%%%%%%

 % {\stackrel{\rm d}{=}} % by definition
\newcommand{\R}{{\mathbb{R}}} % Real numbers
\newcommand{\C}{{\mathbb{C}}} % Complex numbers
\newcommand{\hs}{{\cal H}}

\newcommand{\ts}{\textstyle}

\newcommand{\ket}[1]{|#1\rangle} % Dirac ket
\newcommand{\bra}[1]{\langle#1|} % Dirac bra
 % Driac scalar product
 % Dirac sandwich

 % the pi constant
 % the pi operators letter
 % the kappa operators letter
 % the pi operator component
 % the kappa operator component
 % extended supersymmetry spinor operator
 % extended supersymmetry spinor operator

 % operator quadratic in pp
 % operator quadratic in kk
 % operator quadratic in pp kk
% smaller versions of operators

\newcommand{\fto}[1]{{ a_{#1}}} % field theory operator
\newcommand{\ftoa}[1]{{ a^\dagger_{#1}}} % adjoint field theory operator
\newcommand{\ftog}[2]{{ b_{#1}^{#2}}} % field theory operator green component
\newcommand{\ftoag}[2]{{ b^{#2\dagger}_{#1}}} % adjoint field theory operator g.c.
\newcommand{\ftogo}[2]{{ \tilde b_{#1}^{#2}}} % field theory operator original green component
\newcommand{\ftoago}[2]{{ \tilde b^{#2\dagger}_{#1}}} % adjoint field theory operator original g.c.

 % even root vector
 % odd root vector
\newcommand{\gvac}[1]{{\ket{0}_{\!#1}}} % vacuum in Green factor space

\newcommand{\doubleGreen} {{B}} % letter for double Green operator (usually A)
\newcommand{\ftAg}[3]{{ \doubleGreen^{(#2)}_{#1#3}}} % A operator g.c.
\newcommand{\ftAag}[3]{{ \doubleGreen^{(#2)\dagger}_{#1#3}}} % adjoint A operator g.c.
\newcommand{\epm}[2]{{ e^{(#1)}_{#2}}} % e plus-minus

 % lowest weight vector of Verma module

\newcommand{\monom}[1]{{ {\cal P}(#1) }}

\newcommand{\ce} {{\epsilon}} % conformal energy eigenvalue
\newcommand{\osplw} {{\underline \lambda}} %
\newcommand{\ospsign} {{\Lambda}} %
\newcommand{\ospsigs} {{\Lambda}} % part of osp signature defining su(n) UIR
\newcommand{\ospvec} {{l}} %
\newcommand{\sohw} {{\overline \mu}} %
\newcommand{\sow} {{\mu}} %
\newcommand{\sosign} {{M}} %
\newcommand{\sovec} {{m}} %
\newcommand{\mullab}[1]{{ \eta_{#1} {} }} % counts multiplicity
\newcommand{\mulmax}[1]{{ N_{#1} {} }} % multiplicity (max \mullab)

\newcommand{\orb} {{\bf o}} % {{orb}} %
\newcommand{\spin} {{\bf s}} % {{spin}}%
\newcommand{\hsorb} {{\hs^{\orb}}} %
\newcommand{\hsspin} {{\hs^{\spin}}} %
\newcommand{\CSAosp} {{\mathfrak{h}_{osp}}} % Cartan subalgebra of osp(1|2n)
\newcommand{\Gosp} {\mathfrak{g}_{osp}} % osp(1|2n) algebra
\newcommand{\CSAso} {{\mathfrak{h}_{so}}} % Cartan subalgebra of so(p)
\newcommand{\Gso} {{\mathfrak{g}_{so}}} % so(p) algebra
\newcommand{\phalf} {{q}} %
\newcommand{\V} {{V}} % subspace

\newtheorem{theorem}{Theorem}[section]
\newtheorem{lemma}[theorem]{Lemma}
\newtheorem{corollary}[theorem]{Corollary}
\newenvironment{proof}[1][Proof]{\begin{trivlist} \item[\hskip \labelsep {\bfseries #1}]}{\end{trivlist}}
\newcommand{\qed}{\nobreak \ifvmode \relax \else
      \ifdim\lastskip<1.5em \hskip-\lastskip
      \hskip1.5em plus0em minus0.5em \fi \nobreak
      \vrule height0.75em width0.5em depth0.25em\fi}

\begin{document}
\title{Role of the orthogonal group in construction of $osp(1|2n)$ representations}

\author{Salom Igor, Institute of Physics, University of Belgrade}

\maketitle

\begin{abstract}
It is well known that the symmetric group has an important role (via Young tableaux formalism) both in labelling of the representations of the unitary group and in construction of the corresponding basis vectors (in the tensor product of the defining representations). We show that orthogonal group has a very similar role in the context of positive energy representations of $osp(1|2n, \R)$. In the language of parabose algebra, we essentially solve the long standing problem of reducibility of Green's ansatz representations.
\end{abstract}

\section{Introduction}

The $osp(1|2n, \R)$ superalgebra attracts nowadays significant attention, primarily as a natural generalization of the conformal supersymmetry in higher dimensions \cite{ExamplesFronsdal, ExamplesBandos2000, ExamplesBandos2001, ExamplesLukierski, ExamplesVasiliev2002, ExamplesPlyshchay, ExamplesVasiliev2008, SalomFDP, ExamplesFedoruk2013}. In the context of space-time supersymmetry, knowing and understanding unitary irreducible representations (UIR's) of this superalgebra is of extreme importance, as these should be in a direct relation with the particle content of the corresponding physical models.

And the most important from the physical viewpoint are certainly, so called, positive energy UIR's, which are the subject of this paper. More precisely, the goal of the paper is to clarify how these representations can be obtained by essentially tensoring the simplest nontrivial positive energy UIR (the one that corresponds to oscillator representation). This parallels the case of the UIR's of the unitary group $U(n)$ constructed within the tensor product of the defining (i.e.\ "one box") representations. In both cases the tensor product representation is reducible, and while this reduction in the $U(n)$ case is governed by the action of the commuting group of permutations, in the $osp$ case\footnote{We will often write shortly $osp(1|2n)$ or $osp$ for the $osp(1|2n, \R)$.}, as we will show, the role of permutations is played by an orthogonal group. We will clarify the details of this reduction.

The $osp(1|2n)$ superalgebra is also known by its direct relation to parabose algebra \cite{Green, OSPVeza}. In the terminology of parastatistics, the tensor product of oscillator UIR's is known as the Green's ansatz \cite{GreensAnsatzAsTensorProduct}. The problem of the decomposition of Green's ansatz space to parabose (i.e.\ $osp(1|2n)$) UIR's is an old one \cite{GreensAnsatzAsTensorProduct}, that we here solve by exploiting additional orthogonal symmetry of a "covariant" version of the Green's ansatz.

The paper is organized as follows:

In the following section we introduce the "covariant" version of the Green's ansatz (i.e.\ tensor product representation adapted to the supersymmetry case). In the third section we point out to the additional, to be called "gauge" symmetry of the ansatz, that is of orthogonal type. In the fourth section we introduce root systems of both $osp$ and the gauge algebras. The main theorems will be proved in the fifth section, clarifying the interplay between the gauge symmetry of the ansatz and the $osp$ superalgebra itself. It will be shown that:  1) quantum numbers of the gauge group remove all degeneracy of $osp$ representations appearing in the reduction of the tensor product representation; 2) transformation properties of a vector w.r.t.\ (with respect to) the gauge group determine also its $osp$ representation and {\it vice versa}, and the explicit relation between $osp$ and gauge UIR labels will be given; 3) decomposition of any $osp(1|2n, \R)$ UIR to its $sp(2n, \R)$ subrepresentations is also determined by the gauge transformation properties; 4) the listed properties allow us to explicitly write down $osp$ lowest weight vectors. The final section is reserved for some concluding remarks.

\section{Covariant Green's ansatz}

Structural relations of $osp(1|2n)$ superalgebra can be compactly written in the form of trilinear relations of odd algebra operators $\fto{\alpha}$ and $\ftoa{\alpha}$:
\begin{eqnarray}
& {}[\{ \fto{\alpha}, \ftoa{\beta}\}, \fto{\gamma}] =
-2\delta_{\beta\gamma} \fto{\alpha}, \qquad
{}[\{ \ftoa{\alpha}, \fto{\beta}\}, \ftoa{\gamma}] =
2\delta_{\beta\gamma} \ftoa{\alpha},
 \label{parabose algebra_12} \\ %
& [\{ \fto{\alpha}, \fto{\beta}\}, \fto{\gamma}], \qquad [\{ \ftoa{\alpha}, \ftoa{\beta}\}, \ftoa{\gamma}] = 0,
\label{parabose algebra_34}  \end{eqnarray} %
where operators $\{ \fto{\alpha}, \ftoa{\beta}\}$, $\{ \fto{\alpha}, \fto{\beta}\}$ and $\{ \ftoa{\alpha}, \ftoa{\beta}\}$ span the even part of the superalgebra and Greek indices take values $1, 2, \dots n$ (relations obtained from these by use of Jacobi identity are also implied). This compact notation emphasises the direct connection \cite{OSPVeza} of $osp(1|2n)$ superalgebra with the parabose algebra of $n$ pairs of creation/annihilation operators \cite{Green}.

If we (in the spirit of original definition of parabose algebra \cite{Green}) additionally require that the dagger symbol $\dagger$ above denotes hermitian conjugation in the algebra representation Hilbert space (of positive definite metrics), then we have effectively constrained ourselves to the, so called, positive energy UIR's of $osp(1|2n)$.\footnote{Omitting a short proof, we note that in such a Hilbert space all superalgebra relations actually follow from one single relation -- the first or the second of (\ref{parabose algebra_12}).} Namely, in such a space, "conformal energy" operator
\begin{equation} E\equiv \frac 12 \sum_\alpha \{ \fto{\alpha}, \ftoa{\alpha} \} \label{ConformalEnergy}\end{equation}
must be a positive operator. Operators $\fto{\alpha}$ reduce the eigenvalue of $E$, so the Hilbert space must contain a subspace that these operators annihilate. This subspace is called vacuum subspace:
\begin{equation} V_0 = \{\ket{v}, \fto{\alpha} \ket{v} = 0\} \label{VacuumSubspace}. \end{equation}
If the positive energy representation is irreducible, all vectors from $V_0$ have the common, minimal eigenvalue $\ce_0$ of $E$: $E\ket{v} = \ce_0 \ket{v}, \ket{v} \in V_0$. Representations with one dimensional subspace $V_0$ are called "unique vacuum" representations.

In this paper we will constrain our analysis to UIR's with integer and half-integer values of $\ce_0$ (in principle, $\ce_0$ has also continuous part of the spectrum -- above the, so called, first reduction point of the Verma module). It turns out that all representations from this class can be obtained by representing the odd superalgebra operators $\fto{}$ and $\ftoa{}$  as the following sum:
\begin{equation} \textstyle \fto{\alpha} = \sum_{a = 1}^{p} \ftog{\alpha}{a} \ e^{a}, \qquad \ftoa{\alpha} = \sum_{a = 1}^{p} \ftoag{\alpha}{a} \ e^{a}. \label{GGAnsatz} \end{equation}
In this expression integer $p$ is known as the order of the parastatistics, $e^a$ are elements of a real Clifford algebra:
\begin{equation} \{ e^a, e^b\} = 2 \delta^{ab} \label{CliffordAlgebra} \end{equation}
and operators $\ftog{\alpha}{a}$ together with adjoint $\ftoag{\alpha}{a}$ satisfy ordinary
bosonic algebra relations. There are total of $n\cdot p$ mutually commuting pairs of bosonic annihilation-creation operators $(\ftog{\alpha}{a},\ftoag{\alpha}{a})$:
\begin{equation}   [\ftog{\alpha}{a}, \ftoag{\beta}{b}]
= \delta_{\beta\alpha} \delta^{ab}; \quad [\ftog{\alpha}{a},
\ftog{\beta}{b}] = 0. \label{BoseAlgebra}\end{equation}
Indices $a, b, ...$ from the beginning of the Latin alphabet will, throughout the paper, take values $1,2,\dots p$.

This construction appeared long ago in a paper by Greenberg and Macrea \cite{GreenbergGauge}, where they discussed a gauge-invariant formulation of parastatistics. However, its potential relevance for the construction of representations remained unnoticed. Relation (\ref{GGAnsatz}) is a slight variation, more precisely, realization\footnote{A possibility of such a realisation is mentioned already in \cite{Green}.}, of a more common form of the Green's ansatz \cite{Green, GreenbergCompleteness}. Namely, in the same paper \cite{Green} where Green introduced parabose algebra, he offered a class of solutions ("Green's ansatz") for the trilinear relations (\ref{parabose algebra_12},\ref{parabose algebra_34}) in the terms of sum of operators satisfying "mixed" commutation and anticommutation relations:
\begin{equation} \textstyle \fto{\alpha} = \sum_{a = 1}^{p}  \ftogo{\alpha}{a}, \label{GreensAnsatzGenuine} \end{equation}
where $\ftogo{\alpha}{a}$ and $\ftoago{\alpha}{a} $ anticommute for different values of Green's indices $a$ and $b$:
\begin{equation} a \neq b \Rightarrow \{\ftoago{\alpha}{a}, \ftoago{\alpha}{b} \} = \{\ftogo{\alpha}{a}, \ftogo{\alpha}{b} \} = \{\ftogo{\alpha}{a}, \ftoago{\alpha}{b} \} = 0 \end{equation}
and behave as usual bose creation and annihilation operators otherwise:
\begin{equation} [\ftogo{\alpha}{a}, \ftoago{\beta}{a}]
= \delta_{\beta\alpha}, \; [\ftogo{\alpha}{a},
\ftogo{\beta}{a}] = 0.\end{equation}

The relation of the original Green's ansatz (\ref{GreensAnsatzGenuine}) with the Greenberg-Macrea variant (\ref{GGAnsatz}) is direct, with the latter being one concrete realization of the former: $\ftogo{\alpha}{a} = \ftog{\alpha}{a} \ e^{a}$ (we do not use any summation convention). We will refer to the (\ref{GGAnsatz}) as the \emph{covariant Green's ansatz} (it is covariant w.r.t.\ gauge group introduced in next section), or, simply, Green's ansatz, while the form (\ref{GreensAnsatzGenuine}) we will call "original" or "noncovariant".

The representation space of operators (\ref{GGAnsatz}) can be seen as tensor product of $p$ multiples of Hilbert spaces $\hs_{a}$ of ordinary linear harmonic oscillator in $n$-dimensions multiplied by the representation space of the Clifford algebra:
\begin{equation} \hs = \hs_{1}
\otimes \hs_{2} \otimes \cdots \otimes \hs_{p} \otimes \hs_{CL}. \label{HilbertSpace}\end{equation}
A single factor Hilbert space $\hs_{a}$ is the space of unitary representation of $n$ dimensional bose algebra of operators $(\ftog{\alpha}{a},\ftoag{\alpha}{a}), \alpha=1,2,\dots n$: $\hs_{a}\cong {\cal U}(\ftoag{}{a}) \gvac{a}$, where $\gvac{a}$ is the usual Fock vacuum of factor space $\hs_{a}$. The representation space $\hs_{CL}$ of real Clifford algebra (\ref{CliffordAlgebra}) is of dimension $2^{[p/2]}$, i.e.\ isomorphic with $\C^{2^{[p/2]}}$ (matrix representation). Positive definite scalar product is introduced in usual way in each of the factor spaces, endowing entire space $\hs$ also with positive definite scalar product. The space is spanned by the vectors:
\begin{equation} \hs = l.s.\{\monom{\ftoag{}{} }\gvac{} \otimes \omega \}, \label{HSBasis}\end{equation}
where $\monom{\ftoag{}{} }$ are monomials in mutually commutative operators $\ftoag{\alpha}{a}$, $\gvac{} \equiv \gvac{1} \otimes \gvac{2} \otimes \cdots \otimes \gvac{p}$ and $w \in \hs_{CL}$.

In the case $p=1$ (the Clifford part becomes trivial) we obtain the simplest positive energy UIR of $osp(1|2n)$ -- the $n$ dimensional harmonic oscillator representation. The order $p$ Green's ansatz representation of $osp(1|2n)$ is, effectively, representation in the $p$-fold tensor product of oscillator representations \cite{GreensAnsatzAsTensorProduct}, with the Clifford factor space taking care of the anticommutativity properties of odd superalgebra operators. It is easily verified that even superalgebra elements act trivially in the Clifford factor space and that their action is simply sum of actions in each of the factor spaces:
\begin{eqnarray} & \{ \ftoa{\alpha}, \fto{\beta} \} = \sum_{a=1}^p \{\ftoag{\alpha}{a}, \ftog{\beta}{a}\}, & \nonumber \\ & \{ \fto{\alpha}, \fto{\beta} \} = \sum_{a=1}^p \{\ftog{\alpha}{a},\ftog{\beta}{a}\}, \qquad \{ \ftoa{\alpha}, \ftoa{\beta} \} =  \sum_{a=1}^p \{\ftoag{\alpha}{a}, \ftoag{\beta}{a}\}.& \label{EvenOperators}\end{eqnarray}

The space (\ref{HilbertSpace}) is highly reducible under action of $osp$ superalgebra, and clarifying the details of this decomposition is one of the goals of this paper. It necessarily decomposes into direct sum of positive energy representations (both unique vacuum and non unique vacuum representations) and thus, from the aspect of $osp$ transformation properties, space $\hs$ is spanned by:
\begin{equation} \hs = l.s.\{ \ket{(\ospsign, \ospvec), \mullab{\ospsign} } \}, \label{OSPcontent} \end{equation}
where $\ospsign$ labels $osp(1|2n)$ positive energy UIR,  $\ospvec$ uniquely labels a concrete vector within the UIR $\ospsign$, and $\mullab{\ospsign} = 1, 2, \dots \mulmax{\ospsign}$ labels possible multiplicity of UIR $\ospsign$ in the representation space $\hs$. If some UIR $\ospsign$ does not appear in decomposition of $\hs$, then the corresponding $\mulmax{\ospsign}$ is zero. Label $\ospsign$ in (\ref{OSPcontent}) runs through all (integer and halfinteger positive energy) UIR's of $osp(1|2n)$ such that $\mulmax{\ospsign}>0$ and $\ospvec$ runs through all vectors from UIR $\ospsign$.

\section{Gauge symmetry of the ansatz}

Green's ansatz in the form (\ref{GGAnsatz}) possesses certain intrinsic symmetries. First, we note that hermitian operators
\begin{equation} G^{ab} \equiv \sum_{\alpha = 1}^n i (\ftoag{\alpha}{a}\ftog{\alpha}{b} - \ftoag{\alpha}{b}\ftog{\alpha}{a}) + \frac{i }4 [e^a, e^b] \label{SOpSymmetry}\end{equation}
commute with entire $osp$ superalgebra, which immediately follows after checking that $[G^{ab}, \fto{\alpha}] = 0$. Operators $G^{ab}$ themselves satisfy commutation relations of $so(p)$ algebra. The second term in (\ref{SOpSymmetry}) acts in the the Clifford factor space, generating a faithful representation of $Spin(p)$ (i.e.\ spinorial representation of double cover of $SO(p)$ group). Action of the first terms from (\ref{SOpSymmetry}) generate $SO(p)$ group action in the space $\hs_{1}\otimes \hs_{2} \otimes \cdots \otimes \hs_{p}$. In the entire space $\hs$ operators $G$ generate $Spin(p)$ group and all vectors belong to spinorial unitary representations of this symmetry group. The two terms in (\ref{SOpSymmetry}) thus resemble orbital and spin parts of rotation generators and we will often use that terminology. In particular $\hs \equiv \hsorb \otimes \hsspin$, where $\hsorb = \hs_{1}\otimes \hs_{2} \otimes \cdots \otimes \hs_{p}$ and $\hsspin = \hs_{CL}$.

For even values of $p$, in addition to the symmetry generated by operators $G$, Green's ansatz is also invariant to inversions induced by:
\begin{equation} I^a \equiv I^a_\orb \otimes I^a_\spin; \quad I^a_\orb \equiv exp(i\pi \ts \sum_\alpha \ftoag{\alpha}{a} \ftog{\alpha}{a}); \quad I^a_\spin \equiv -i \overline e e^a, \label{InversionOperators}\end{equation}
where $\overline e \equiv i^{[p/2]} e^1 e^2 \cdots e^p$. Since $(I^a)^2 = (I^a_\spin)^2 = (I^a_\orb)^2 = 1$ and $I^a$ "inverts" both $\ftog{\alpha}{a}$ and $e^a$:
\begin{equation} I^a \ftog{\alpha}{b} I^a = (-1)^{\delta_{ab}}\ftog{\alpha}{b}, \qquad I^a e^b I^a = (-1)^{\delta_{ab}}e^b, \end{equation}
it is easily verified that $[I^a, \fto{\alpha}] = 0$.

Operators $I^a$, together with $Spin(p)$ group elements generated by $G^{ab}$, form $Pin(p)$ group (the double cover of orthogonal group $O(p)$).

For odd values of $p$ linear operator with properties of $I^a_\spin$ does not exist in the universal enveloping algebra of the Clifford algebra.

We will refer to the symmetry group of the Green's ansatz, i.e.\ $Spin(p)$ for $p$ odd $Pin(p)$ for $p$ even -- as the gauge group. One reason is that it was also terminology used in the paper \cite{GreenbergGauge} where the construction (\ref{GGAnsatz}) was first explicitly introduced. The other reason is that this symmetry introduces a type of "non-physical" degree of freedom, being a symmetry of the used mathematical tool rather than $osp(1|2n)$ superalgebra itself.

Vectors in space $\hs$ carry quantum numbers also according to their transformation properties under the gauge group. As the gauge group commutes with $osp(1|2n)$, these numbers certainly remove at least a part of degeneracy of $osp$ representations in $\hs$, in the sense that relation (\ref{OSPcontent}) can be rewritten as:
\begin{equation} \hs = l.s.\{ \ket{(\ospsign, \ospvec), (\sosign, \sovec), \mullab{(\ospsign, \sosign)} } \}, \label{OSPandSOcontent} \end{equation}
where ($\ospsign$, $\ospvec$) uniquely label vector $\ospvec$ within $osp(1|2n)$ positive energy UIR $\ospsign$, ($\sosign$, $\sovec$) uniquely label vector $\sovec$ within finite dimensional UIR $\sosign$ of the gauge group, and $\mullab{(\ospsign, \sosign)} = 1, 2, ... \mulmax{(\ospsign, \sosign)}$ labels possible remaining multiplicity of tensor product of these two representations ${\cal D}_\ospsign^{osp} \otimes {\cal D}_\sosign^{gauge}$ in the space $\hs$. Again, if some combination $(\ospsign, \sosign)$ does not appear in decomposition of $\hs$, then the corresponding $\mulmax{(\ospsign, \sosign)}$ is zero.

The first important result of this paper is that the gauge symmetry actually removes all degeneracy in decomposition of $\hs$ to $osp(1|2n)$ UIR's, i.e.\ that the multiplicity of $osp(1|2n)$ UIR's is fully taken into account by labeling transformation properties of the vector w.r.t.\ the gauge symmetry group. Furthermore, we will show that there is one-to-one correspondence between UIR's of $osp(1|2n)$ and of the gauge group that appear in the decomposition, meaning that transformation properties under the gauge group action automatically fix the $osp(1|2n)$ representation. We formulate this more precisely in the following theorem.

\begin{theorem} \label{Th:main theorem}
The following statements hold for the basis (\ref{OSPandSOcontent}) of the Hilbert space $\hs$:
\begin{enumerate}\itemsep -5pt
\item All multiplicities $\mulmax{(\ospsign, \sosign)}$ are either 1 or 0.
\item Let the $\cal N$ be the set of all pairs $(\ospsign, \sosign)$ for which $\mulmax{(\ospsign, \sosign)} =1$, i.e.\ ${\cal N} = \{(\ospsign, \sosign) | \mulmax{(\ospsign, \sosign)} =1 \}$ and let the $\cal L$ and $\cal M$ be sets of all $\ospsign$ and $\sosign$, respectively, that appear in any of the pairs from $\cal N$. Then pairs from $\cal N$ naturally define bijection from $\cal L$ to $\cal M$, $\cal N\!\!:\! L \rightarrow M$.

\end{enumerate}

\end{theorem}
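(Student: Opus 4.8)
The plan is to read Theorem~\ref{Th:main theorem} as a double-commutant (Howe-type duality) statement and to reduce both claims to the single assertion that $osp(1|2n)$ and the gauge group are \emph{mutual commutants} in $\mathrm{End}(\hs)$. Since the $osp$-action on $\hs$ is a direct sum of unitary irreducibles, it is completely reducible, and one may write $\hs = \bigoplus_\ospsign \V_\ospsign \otimes \mathrm{Mult}_\ospsign$ with multiplicity space $\mathrm{Mult}_\ospsign = \mathrm{Hom}_{osp}(\V_\ospsign,\hs)$ a module for the commutant of $osp$. The gauge group, which by (\ref{SOpSymmetry})--(\ref{InversionOperators}) commutes with all of $osp$, acts on each $\mathrm{Mult}_\ospsign$. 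Granting that the gauge group generates the \emph{entire} commutant, the abstract double-commutant theorem for a completely reducible action gives at once that each $\mathrm{Mult}_\ospsign$ is an \emph{irreducible} gauge-module ${\cal D}^{gauge}_{\sosign(\ospsign)}$, that these are pairwise inequivalent for distinct $\ospsign$, and that $\ospsign \mapsto \sosign(\ospsign)$ is a bijection onto the set of gauge irreps occurring. Reading off multiplicities, the coefficient of ${\cal D}^{osp}_\ospsign \otimes {\cal D}^{gauge}_\sosign$ is the multiplicity of $\sosign$ in $\mathrm{Mult}_\ospsign$, hence $1$ if $\sosign=\sosign(\ospsign)$ and $0$ otherwise. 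This yields statement~1 and statement~2 simultaneously, so everything reduces to computing the commutant.

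One inclusion of the commutant claim is already in Section~3; the reverse is the substance. I would obtain it in two steps, the first handling the even subalgebra. By (\ref{EvenOperators}) the even part acts only on $\hsorb = \hs_1\otimes\cdots\otimes\hs_p$ (trivially on $\hsspin$), realizing the oscillator representation of $sp(2n,\R)$ on the Fock space of $np$ bosonic modes. Here $(Sp(2n,\R),O(p))$ is a classical reductive dual pair, with the orbital halves of $G^{ab}$ together with the orbital inversions $I^a_\orb$ generating exactly $O(p)$; Howe duality then identifies the commutant of $sp(2n,\R)$ on $\hsorb$ with $\C[O(p)]_\orb$. Consequently any operator commuting with the even part of $osp$ lies in $\C[O(p)]_\orb \otimes \mathrm{End}(\hsspin)$, the second factor being the complexified Clifford algebra.

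The decisive second step imposes commutation with the \emph{odd} generators $\fto{\alpha} = \sum_a \ftog{\alpha}{a}\, e^a$ and their adjoints. Writing a candidate commutant element as an orbital $O(p)$-operator tensored with a Clifford element, the condition $[X,\fto{\alpha}]=0$ forces the orthogonal transformation implemented on the oscillators $\ftog{\alpha}{a}$ by the orbital factor to coincide with the transformation implemented by conjugation on the generators $e^a$ in the spin factor. Since the Clifford-algebra elements that implement orthogonal transformations of the $e^a$ by conjugation are precisely $Pin(p)$ (respectively $Spin(p)$ for odd $p$), the two factors are locked together along the diagonal and $X$ must lie in the gauge group algebra. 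This confines the commutant to exactly the gauge algebra and closes the loop.

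I expect this last step to be the main obstacle, on two counts. Analytically, one must make the double-commutant theorem rigorous on the infinite-dimensional $\hs$ (e.g.\ by passing to the generated von Neumann algebras, both of type I since the gauge group is compact and the $osp$-action is a direct sum of UIR's), because $osp$ does not preserve the oscillator-number grading that the gauge group respects, so one cannot simply work on a single finite-dimensional graded piece. Algebraically, the double-cover bookkeeping is delicate: one must track the signs and phases distinguishing $Pin(p)$ from $O(p)$, explain through the center of the Clifford algebra why only $Spin(p)$ survives for odd $p$ (the operator $I^a_\spin$ being absent there, as noted after (\ref{InversionOperators})), and verify that the diagonal embedding reproduces precisely $G^{ab}$ and $I^a$. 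The explicit relation between $\ospsign$ and $\sosign$ announced in the introduction would then be extracted, in the root-theoretic language of Section~4, by matching lowest-weight data across the bijection $\sosign(\ospsign)$.
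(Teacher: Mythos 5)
Your reduction of both claims to the single assertion that the gauge group generates the full commutant of $osp(1|2n)$ in $\hs$ is a legitimate and genuinely different strategy from the paper's, and your first step (Howe duality for the compact dual pair $(Sp(2n,\R),O(p))$ acting on $\hsorb$, combined with the tensor-product commutation theorem, to identify the commutant of the \emph{even} subalgebra with the algebra generated by $O(p)_\orb\otimes\mathrm{End}(\hsspin)$) is sound. But the decisive second step is where the entire content of the theorem lives, and as written it does not go through: you argue that $[X,\fto{\alpha}]=0$ ``locks'' the orbital and spin factors along the diagonal by treating $X$ as a pure tensor $g_\orb\otimes c$ with $g_\orb$ a group element. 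A general element of the commutant of the even part is not of this form; it is a (weak limit of) sums $\sum_i \pi_{\orb}(g_i)\otimes c_i$, equivalently an arbitrary element of $\bigoplus_{\sohw_\orb}\mathrm{End}\bigl({\cal D}_{\sohw_\orb}\otimes\hsspin\bigr)$ acting on multiplicity spaces, and for such elements the ``matching of orthogonal transformations'' argument is simply not available. One is forced instead to analyze how the odd operators $\fto{\alpha}$ connect the various $sp(2n)\times O(p)_\orb$ isotypic components --- which is precisely the hard part of the theorem. You have honestly flagged this as the main obstacle rather than supplied it; until it is done, neither the multiplicity-freeness nor the bijection is established.

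For comparison, the paper takes a constructive route that sidesteps the commutant computation entirely: it introduces the gauge-invariant ``spin--orbit'' operator $Q=\frac12\sum_\alpha[\fto{\alpha},\ftoa{\alpha}]$, shows (Lemma \ref{lemmaQ}) that $Q=E$ characterizes the $osp$ vacuum subspace, deduces from the eigenvalue computation that any joint ($osp$-lowest, gauge-highest) weight vector must lie in the stretched component $\sohw=\sohw_\orb+\sohw_\spin$ of $\V^\orb_{\sohw_\orb}\otimes\V^\spin_{\sohw_\spin}$, and then solves $\fto{\alpha}\ket{v}=0$ explicitly to exhibit the unique such vector (\ref{lwhwVectorForm}); uniqueness up to scale gives $\mulmax{(\ospsign, \sosign)}\in\{0,1\}$, and the explicit signature relation (\ref{Nbijection}) gives the bijection (plus, as a bonus, its closed form, which your abstract argument would not produce). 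If you wish to pursue your route, the cleanest completion is probably to establish multiplicity-freeness first --- by an argument of the paper's type or a branching/seesaw argument --- and then \emph{deduce} the commutant statement from it, rather than the other way around.
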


The theorem will be proved by explicit construction of the bijection $\cal N$, after some preliminary definitions and lemmas.

\begin{corollary}

If $osp(1|2n)$ representation $\ospsign$ appears in the decomposition of the space $\hs$, then its multiplicity in the decomposition is given by the dimension of the gauge group representation ${\cal N}(\ospsign)$.

\end{corollary}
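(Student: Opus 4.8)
The plan is to derive this Corollary as an immediate consequence of Theorem~\ref{Th:main theorem} by a multiplicity-counting argument. The starting point is that the two decompositions (\ref{OSPcontent}) and (\ref{OSPandSOcontent}) describe the same space $\hs$, the latter being a refinement of the former obtained by resolving the gauge quantum numbers $(\sosign,\sovec)$. Since the gauge group commutes with the entire $osp(1|2n)$ superalgebra, $\hs$ splits into isotypic blocks of the form ${\cal D}_\ospsign^{osp} \otimes {\cal D}_\sosign^{gauge}$, each occurring with multiplicity $\mulmax{(\ospsign,\sosign)}$, and the task is simply to compare the total $osp$ multiplicity $\mulmax{\ospsign}$ appearing in (\ref{OSPcontent}) against these refined multiplicities.

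First I would make the link between the two multiplicities precise. Viewed through the $osp$ action alone, the gauge factor ${\cal D}_\sosign^{gauge}$ is a pure multiplicity (degeneracy) space: because the gauge generators commute with $osp$, each of its $\dim {\cal D}_\sosign^{gauge}$ basis vectors $\sovec$ labels an independent copy of the UIR $\ospsign$. Thus each block ${\cal D}_\ospsign^{osp}\otimes{\cal D}_\sosign^{gauge}$ contributes exactly $\dim {\cal D}_\sosign^{gauge}$ copies of ${\cal D}_\ospsign^{osp}$, and summing over all gauge types yields the counting identity
\begin{equation}
\mulmax{\ospsign} = \sum_\sosign \mulmax{(\ospsign,\sosign)} \, \dim {\cal D}_\sosign^{gauge}.
\end{equation}

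It then remains only to evaluate the right-hand side using the two parts of Theorem~\ref{Th:main theorem}. By the first part every $\mulmax{(\ospsign,\sosign)}$ equals $0$ or $1$, so the sum reduces to a sum of dimensions over those $\sosign$ with $(\ospsign,\sosign)\in{\cal N}$. By the second part ${\cal N}$ is the graph of a bijection ${\cal N}:{\cal L}\to{\cal M}$, hence single-valued on its domain; since the hypothesis $\mulmax{\ospsign}>0$ forces $\ospsign\in{\cal L}$, there is exactly one such $\sosign$, namely $\sosign={\cal N}(\ospsign)$. The sum therefore collapses to a single term and gives $\mulmax{\ospsign}=\dim {\cal D}_{{\cal N}(\ospsign)}^{gauge}$, which is the assertion.

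I do not anticipate a genuine obstacle here, as the substantive content is already carried by Theorem~\ref{Th:main theorem}; the only step demanding a little care is the justification of the counting identity, i.e.\ the claim that from the $osp$ side the commuting gauge action contributes nothing beyond multiplicity. This is the standard decomposition of a module over a pair of mutually commuting algebras into tensor products of their irreducibles, and I would invoke it through Schur's lemma rather than re-derive it.
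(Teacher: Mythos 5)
Your proposal is correct and matches the paper's (implicit) reasoning: the paper states this corollary without proof as an immediate consequence of Theorem~\ref{Th:main theorem}, and the argument it has in mind is exactly the one you spell out — each block ${\cal D}_\ospsign^{osp}\otimes{\cal D}_\sosign^{gauge}$ contributes $\dim{\cal D}_\sosign^{gauge}$ copies of $\ospsign$ under the $osp$ action alone, and parts 1 and 2 of the theorem collapse the sum over $\sosign$ to the single term $\sosign={\cal N}(\ospsign)$. Your extra care about the counting identity via Schur's lemma is sound but not a departure from the paper's route.
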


\section{Root systems}

At this point we must introduce root systems, both for $osp(1|2n)$ superalgebra and for the $so(p)$ algebra of the gauge group.

We choose basis of a Cartan subalgebra $\CSAosp$ of (complexified) $osp(1|2n)$ as:
\begin{equation} \CSAosp =  l.s.\Big\{ \ts \frac 12 \{\ftoa{\alpha}, \fto{\alpha} \}, \alpha =1, 2, \dots n \Big\}. \end{equation}
Positive roots, expressed using elementary functionals, are:
\begin{eqnarray}  \Delta^+_{osp} &=& \{+\delta_\alpha, 1 \leq \alpha \leq n; +\delta_\alpha + \delta_\beta, 1 \leq \alpha < \beta \leq n; \nonumber \\ & & +\delta_\alpha - \delta_\beta, 1 \leq \alpha < \beta \leq n; +2\delta_\alpha, 1 \leq \alpha \leq n \}  \end{eqnarray}
and the corresponding positive root vectors, spanning subalgebra $\Gosp^+$, are (in the same order):
\begin{eqnarray}
& \Big\{ \ftoa{\alpha}, 1 \leq \alpha \leq n; \{\ftoa{\alpha}, \ftoa{\beta} \}, 1 \leq \alpha < \beta \leq n; \nonumber \\
&  \{\ftoa{\alpha}, \fto{\beta} \}, 1 \leq \alpha < \beta \leq n; \{\ftoa{\alpha}, \ftoa{\alpha} \}, 1 \leq \alpha \leq n \Big\}.
\end{eqnarray}
Simple root vectors are: 
\begin{equation} \Big\{ \{\ftoa{1}, \fto{2} \}, \{\ftoa{2}, \fto{3} \}, \dots, \{\ftoa{n-1}, \fto{n} \}, \ftoa{n} \Big\}. \end{equation}
With this choice of positive roots, positive energy UIR's of $osp(1|2n)$ become lowest weight representations. Thus, we will label positive energy UIR's of $osp(1|2n)$ either by their lowest weight
\begin{equation} \osplw = (\osplw_1, \osplw_2, \dots, \osplw_n), \end{equation}
or by its signature
\begin{equation} \ospsign = [d; \ospsigs_1, \ospsigs_2, \dots, \ospsigs_{n-1}] \label{ospSignature}\end{equation}
related to the lowest weight $\osplw$ by $d = \osplw_{1}$, $\ospsigs_\alpha = \osplw_{\alpha + 1} - \osplw_{\alpha}$. $\ospsigs_\alpha$ are positive integers \cite{DobrevZhang} and spectrum of $d$ is positive and dependant of $\ospsigs_\alpha$ values. 

As a basis of Cartan subalgebra $\CSAso$ of $so(p)$ we take:
\begin{equation} \CSAso =  l.s.\bigg\{ G^{(k)}\equiv G^{2k-1, 2k}, k =1, 2, \dots \phalf \bigg\}, \end{equation}
where $\phalf = [p/2]$ is the dimension of Cartan subalgebra (indices $k,l,...$ from the middle of alphabet will take values $1,2,..., \phalf$). Positive roots in case of even $p$ are:
\begin{eqnarray}  \Delta^+_{so} &=& \{+\delta_k + \delta_l, 1 \leq k < l \leq \phalf; +\delta_k - \delta_l, 1 \leq k < l \leq \phalf \},  \end{eqnarray}
while in the odd case we additionally have $\{+\delta_k, 1 \leq k \leq \phalf \}$.

In accordance with the choice of Cartan subalgebra $\CSAso$ it is more convenient to use the following linear combinations:
\begin{equation} \ftAag{\alpha}{k}{\pm} \equiv \ts \frac{1}{\sqrt 2}(\ftoag{\alpha}{2k-1} \pm i \ftoag{\alpha}{2k}) , \qquad \ftAg{\alpha}{k}{\pm} = \ts \frac{1}{\sqrt 2}(\ftog{\alpha}{2k-1} \mp i \ftog{\alpha}{2k}), \label{Apm} \end{equation}
instead of $\ftoag{}{}$ and $\ftog{}{}$, as $[G^{(k)}, \ftAag{\alpha}{l}{\pm}]= \pm \delta^{kl} \ftAag{\alpha}{l}{\pm}$ and $[G^{(k)}, \ftAg{\alpha}{l}{\pm}]= \mp \delta^{kl} \ftAg{\alpha}{l}{\pm}$. Similarly, we introduce $\epm{k}{\pm} \equiv \ts \frac{1}{\sqrt 2}(e^{2k-1} \pm i e^{2k})$ that satisfy:
\begin{equation} [G^{(k)}, \epm{l}{\pm}]= \pm \delta^{kl} \epm{l}{\pm}. \end{equation}
Odd superalgebra operators take form:
\begin{eqnarray} & & \ftoa{\alpha} = \Big(\sum_{k=1}^\phalf \ftAag{\alpha}{k}{+}\epm{k}{-} + \ftAag{\alpha}{k}{-}\epm{k}{+}\Big) + \epsilon \, \ftoag{\alpha}{p}e^{p}, \label{CreationOperatorsInA} \\
& & \fto{\alpha} = \Big(\sum_{k=1}^\phalf \ftAg{\alpha}{k}{+}\epm{k}{+} + \ftAg{\alpha}{k}{-}\epm{k}{-}\Big) + \epsilon \, \ftog{\alpha}{p}e^{p},\label{AnnihilationOperatorsInA} \end{eqnarray}
where $\epsilon = p \mod 2$.

Now we can express root vectors as:
\begin{eqnarray}
G_{\pm \delta_k \pm \delta_l} &=& \ts \frac i2(G^{2k-1,2l-1} - G^{2k,2l} \pm i G^{2k,2l-1} \pm i G^{2k-1,2l}) \nonumber \\
&=& \ts \sum_\alpha (\ftAag{\alpha}{k}{\pm} \ftAg{\alpha}{l}{\mp} - \ftAag{\alpha}{l}{\pm} \ftAg{\alpha}{k}{\mp}) + \frac 12 \epm{k}{\pm}\epm{l}{\pm},
\end{eqnarray}
\begin{eqnarray}
G_{\pm \delta_k \mp \delta_l} &=& \ts \frac i2(G^{2k-1,2l-1} + G^{2k,2l} \pm i G^{2k,2l-1} \mp i G^{2k-1,2l}) \nonumber \\
&=& \ts \sum_\alpha (\ftAag{\alpha}{k}{\pm} \ftAg{\alpha}{l}{\pm} - \ftAag{\alpha}{l}{\mp} \ftAg{\alpha}{k}{\mp}) + \frac 12 \epm{k}{\pm}\epm{l}{\pm},
\end{eqnarray}
and, for odd $p$, there are also:
\begin{eqnarray}  G_{\pm \delta_k} = \ts \frac i2(G^{2k-1,p} \pm i G^{2k,p}) = \ts \sum_\alpha (\ftAag{\alpha}{k}{\pm} \ftog{\alpha}{p} - \ftoag{\alpha}{p} \ftAg{\alpha}{k}{\pm}) + \frac 12 \epm{k}{\pm} e^{p}. \end{eqnarray}

The space $\hs$ decomposes to spinorial UIR's of $so(p)$ with the highest weight $\sohw = (\sohw^1, \sohw^2, \dots, \sohw^q)$ satisfying $\sohw^1 \geq \sohw^2 \geq \dots \geq \sohw^{q-1} \geq |\sohw^q| \geq \frac 12$ with all $\sohw^q$ taking half-integer values ($\sohw^q$ can take negative values when $p$ is even). However, since the gauge symmetry group in the case of even $p$ is enlarged by presence of inversion operators (\ref{InversionOperators}), any highest weight of UIR of the gauge group satisfies:
\begin{equation} \sohw^1 \geq \sohw^2 \geq \dots \geq \sohw^q \geq 0.\label{sohwCondition} \end{equation}
As the gauge group representation in $\hs$ is spinorial, all $\sohw^k$ take half-integer values greater or equal to $\frac 12$. To label UIR's of the gauge group we will also use signature
\begin{equation} \sosign = [\sosign^1, \sosign^2,\dots,\sosign^q] \label{soSignature}\end{equation}
with $\sosign^k = \sohw^{k} - \sohw^{k+1}, k < q$ and $\sosign^q = \sohw^{q} - \frac 12$. All $\sosign^k$ are positive integers.

The "spin" factor space $\hsspin$ is irreducible w.r.t.\ action of the gauge group. Gauge group representation in the space $\hsspin$ has the highest weight $\sohw_\spin = (\frac 12, \frac 12,\dots, \frac 12)$. Weight spaces of this representation are one dimensional, meaning that basis vectors can be fully specified by weights $\sow_\spin$: 
\begin{equation} \hsspin = l.s.\{ \omega_{\sow_{\spin}} \equiv \omega(\sow_{\spin}^1, \sow_{\spin}^2, \dots, \sow_{\spin}^q)| \sow_\spin^k = \pm \frac 12 \}. \end{equation}
An action of operators $\epm{k}{+}, \epm{k}{-}$ and $e^p$ in this basis is given by:
\begin{equation} \epm{k}{\pm}\omega(\sow_{\spin}^1, \sow_{\spin}^2, \dots, \sow_{\spin}^q) = \sqrt 2 \Bigg(\prod_{l=1}^{k-1} 2\sow_{\spin}^l \Bigg) \omega(\sow_{\spin}^1,  \dots, \sow_{\spin}^{k-1},\sow_{\spin}^k \pm 1,\sow_{\spin}^{k+1},\dots, \sow_{\spin}^q) \end{equation}
and, when $p$ is odd, also:
\begin{equation} e^{p}\omega(\sow_{\spin}^1, \sow_{\spin}^2, \dots, \sow_{\spin}^q) = \Bigg( \prod_{l=1}^{q} 2\sow_{\spin}^l\Bigg)  \omega(\sow_{\spin}^1, \sow_{\spin}^2, \dots, \sow_{\spin}^q). \end{equation}
In these definitions it is implied that $\omega(\sow_{\spin}^1, \sow_{\spin}^2, \dots, \sow_{\spin}^q) \equiv 0$ if any $|\sow_{\spin}^k| > \frac 12$.

Gauge group representation in "orbital" factor space $\hsorb$ decomposes to highest weight $\sohw_\orb$ UIR's such that all $\sohw_\orb^k$ are positive integers. Besides, it is not difficult to verify that, if $n < q$, then
\begin{equation} \sohw_\orb^{n+1} = \sohw_\orb^{n+2} = \dots = \sohw_\orb^{q} = 0 \label{nlessq}\end{equation}
(since maximally $n$ operators (\ref{Apm}) can be antisymmetrized).

\section{Decomposition of the Green's ansatz space}

Let $\V^\orb_{\sohw_\orb}$ be a (reducible) subspace of vectors from $\hsorb$ that transform under representation $\sohw_\orb$ of the gauge group $G^\orb$. Analogously, since $\hsspin$ is irreducible, we will write $\V^\spin_{\sohw_\spin} \equiv \hsspin$. Tensor product $\V^\orb_{\sohw_\orb} \otimes \V^\spin_{\sohw_\spin}$ decomposes under action of full gauge symmetry group $G$ to subspaces $\V_{(\sohw_\orb \sohw_\spin)\sohw}$, each transforming according to gauge group representation $\sohw$:
\begin{equation} \V^\orb_{\sohw_\orb} \otimes \V^\spin_{\sohw_\spin} = \sum_{\sohw \in {\cal M}} \V_{(\sohw_\orb \sohw_\spin)\sohw}. \end{equation}
In the above relation, $\cal M$ is set of highest weights $\sohw$ of irreducible representations ${\cal D}_\sohw$ that appear in the product ${\cal D}_{\sohw_\orb} \otimes {\cal D}_{\sohw_\spin}$. Conversely, for a given $\sohw$, let ${\cal M}_\orb = {\cal M}_\orb(\sohw)$ be the set of highest weights $\sohw_\orb$ such that ${\cal D}_\sohw$ belongs to ${\cal D}_{\sohw_\orb} \otimes {\cal D}_{\sohw_\spin}$. (Since the gauge group is $Spin(p)$ or $Pin(p)$, this is the same as ${\cal M}_\orb(\sohw) = \{ \sohw_\orb | {\cal D}_{\sohw_\orb} \subset {\cal D}_{\sohw} \otimes {\cal D}_{\sohw_\spin}$\}.) It holds:
\begin{equation} \V_{\sohw} = \sum_{\sohw_\orb \in {\cal M}_\orb} \V_{(\sohw_\orb \sohw_\spin)\sohw}, \label{VmuDecomposition}\end{equation}
where $\V_{\sohw}$ is the subspace of vectors from $\hs$ that transform under representation $\sohw$ of the gauge group $G$.

Since the gauge group commutes with $osp(1|2n)$ operators, it is clear that each of the subspaces $\V_{\sohw}$ is invariant w.r.t.\ $osp(1|2n)$ superalgebra action. Furthermore, since even superalgebra operators (\ref{EvenOperators}) also commute separately with orbital and spin parts of the gauge group, each of the subspaces $\V_{(\sohw_\orb \sohw_\spin)\sohw}$ is invariant under action of $sp(2n)$ subalgebra. Only odd superalgebra elements connect different terms of (\ref{VmuDecomposition}). We will show that decomposition (\ref{VmuDecomposition}) actually reflects the decomposition of $osp(1|2n)$ irreducible representation into $sp(2n)$ subrepresentations.

First we will show that the lowest weight vector $\ket{(\osplw, \osplw), \mullab{\osplw}}$ of any $osp(1|2n)$ irreducible representation $\osplw$ (here we used notation of (\ref{OSPcontent}) with the lowest weight $\osplw$ as UIR label instead of signature $\ospsign$, and $\ospvec$ being the lowest weight vector) must belong to some subspace $\V_{(\sohw_\orb \sohw_\spin)\sohw}$ which satisfies $\sohw = \sohw_\orb + \sohw_\spin$.

To show this we introduce operator:
\begin{equation} Q \equiv \frac 12 \sum_\alpha [\fto{\alpha}, \ftoa{\alpha}].\label{Qoperator} \end{equation}
The first important property of this operator is that it commutes with subalgebra $sp(2n)$, which is easily verified. Next, substituting (\ref{GGAnsatz}) in the definition of $Q$ yields:
\begin{eqnarray}  Q & = & \sum_{a,b} \sum_\alpha (\ftoag{\alpha}{a}\ftog{\alpha}{b} - \ftoag{\alpha}{b}\ftog{\alpha}{a})\frac{[e^a, e^b]}{4} + \sum_{a,\alpha} \frac 12 (e^a)^2 = \nonumber \\
&=& \frac {np}{2} +2 \sum_{a>b} G^{ab}_\orb G^{ab}_\spin = \frac {np}{2} + \sum_{a>b} (G^{ab})^2 - (G^{ab}_\orb)^2 - (G^{ab}_\spin)^2.\label{QasLS} \end{eqnarray}
Therefore, this operator is also gauge invariant. Furthermore, we see that it is actually (up to a constant) twice a "spin-orbit coupling" operator (analogue of $\bf L\cdot S$ operator in quantum mechanics). Another important property is given by the following lemma.

\begin{lemma}
For arbitrary vector $\ket v \in \hs$ it holds: $Q \ket v = E \ket v$ if and only if $\ket v$ belongs to $V_0$, where $E$ is conformal energy (\ref{ConformalEnergy}) and $V_0$ denotes vacuum subspace (\ref{VacuumSubspace}) \label{lemmaQ}.
\end{lemma}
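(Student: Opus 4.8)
The plan is to reduce the claim to a statement about a single manifestly positive operator by forming the right linear combination of $E$ and $Q$. The point is that $E$ and $Q$ differ only in whether the two odd operators are symmetrized or antisymmetrized, so their difference collapses to a normal-ordered expression. Using the elementary identity $\{A,B\}-[A,B]=2BA$, I would first compute
\begin{equation} E - Q = \ts \frac 12 \sum_\alpha \big( \{\fto{\alpha}, \ftoa{\alpha}\} - [\fto{\alpha}, \ftoa{\alpha}] \big) = \sum_\alpha \ftoa{\alpha}\fto{\alpha}. \end{equation}
This immediately rewrites the desired equation $Q\ket v = E\ket v$ as the equivalent condition $(E-Q)\ket v = \sum_\alpha \ftoa{\alpha}\fto{\alpha}\ket v = 0$. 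The forward implication is then trivial: if $\ket v \in V_0$ then by definition $\fto{\alpha}\ket v = 0$ for every $\alpha$, so certainly $\sum_\alpha \ftoa{\alpha}\fto{\alpha}\ket v = 0$, that is, $Q\ket v = E\ket v$.

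For the converse I would exploit positive-definiteness of the scalar product, which is the essential hypothesis here. Assuming $(E-Q)\ket v = 0$, I would take the scalar product with $\ket v$ and use that $\ftoa{\alpha}$ is the hermitian conjugate of $\fto{\alpha}$ in this Hilbert space, obtaining
\begin{equation} 0 = \sandwich{v}{E-Q}{v} = \sum_\alpha \sandwich{v}{\ftoa{\alpha}\fto{\alpha}}{v} = \sum_\alpha \| \fto{\alpha}\ket v \|^2. \end{equation}
Since this is a sum of non-negative terms, each must vanish, forcing $\fto{\alpha}\ket v = 0$ for all $\alpha$, i.e.\ $\ket v \in V_0$.

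I do not expect any genuine obstacle: the whole argument is a one-line operator rearrangement followed by a positivity estimate. The only delicate point — and precisely where the reasoning would fail for an indefinite metric — is the final step, where non-negativity of each $\| \fto{\alpha}\ket v \|^2$ is what allows the passage from the vanishing of the single sum $\sum_\alpha \ftoa{\alpha}\fto{\alpha}\ket v$ to the vanishing of every individual $\fto{\alpha}\ket v$. This is exactly the place where the restriction to positive energy (positive definite) representations, emphasised earlier in the excerpt, is used.
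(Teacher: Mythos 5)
Your proposal is correct and follows essentially the same route as the paper: both rewrite $E-Q=\sum_\alpha \ftoa{\alpha}\fto{\alpha}$, note the reverse implication is trivial, and use positive definiteness of the metric to conclude from $\sandwich{v}{E-Q}{v}=\sum_\alpha \|\fto{\alpha}\ket v\|^2=0$ that every $\fto{\alpha}\ket v$ vanishes. No gaps.
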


\begin{proof} Since metric in $\hs$ is positive definite, from:
$$ \bra v E - Q \ket v = \sum_\alpha \bra v \ftoa{\alpha} \fto{\alpha} \ket v = \sum_\alpha ||\fto{\alpha} \ket v||^2 $$
follows $Q \ket v = E \ket v \Rightarrow \forall \alpha, \fto{\alpha} \ket v = 0 \Leftrightarrow \ket v \in V_0$. Proof in the opposite direction is trivial. \qed
\end{proof}

Now consider an $osp(1|2n)$ lowest weight vector $\ket{(\osplw, \osplw), (\sohw, \sovec), \mullab{(\osplw, \sohw)} }$ from basis (\ref{OSPandSOcontent}). We prove another lemma:
\begin{lemma} Vector $\ket{(\osplw, \osplw), (\sohw, \sovec), \mullab{(\osplw, \sohw)} }$ belongs to subspace $\V_{(\sohw_\orb \sohw_\spin)\sohw}$ with $\sohw_\orb = \sohw - \sohw_\spin$.
\end{lemma}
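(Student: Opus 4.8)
The plan is to combine Lemma \ref{lemmaQ} with the Casimir form (\ref{QasLS}) of $Q$ and with the minimal-energy content of the orbital factor, turning the statement into a single numerical identity whose only admissible solution is the claimed $\sohw_\orb$. First I would note that an $osp$ lowest weight vector $\ket v$ is annihilated by every negative root vector, in particular by all $\fto{\alpha}$; hence $\ket v\in V_0$ and, by Lemma \ref{lemmaQ}, $Q\ket v=E\ket v$. Since $\ket v$ carries definite gauge labels it lies in a single sector $\V_{(\sohw_\orb \sohw_\spin)\sohw}$, and the whole argument consists in showing that the scalar eigenvalues of $Q$ and of $E$ on $\ket v$ can agree only when $\sohw_\orb=\sohw-\sohw_\spin$.

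For the $Q$-eigenvalue I would read off (\ref{QasLS}): on a vector of definite total, orbital and spin gauge type the operators $\sum_{a>b}(G^{ab})^2$, $\sum_{a>b}(G^{ab}_\orb)^2$, $\sum_{a>b}(G^{ab}_\spin)^2$ act by the respective quadratic $so(p)$ Casimir eigenvalues $c_2(\nu)=\sum_{k=1}^{\phalf}\nu^k(\nu^k+p-2k)$, so that $Q\ket v=\bigl(\tfrac{np}{2}+c_2(\sohw)-c_2(\sohw_\orb)-c_2(\sohw_\spin)\bigr)\ket v$. For the $E$-eigenvalue I would use that $\ket v$ is simultaneously an $sp(2n)$ lowest weight vector: being killed by all $\fto{\alpha}$ it is killed by the purely orbital lowering operators $\{\fto{\alpha},\fto{\beta}\}=\sum_a\{\ftog{\alpha}{a},\ftog{\beta}{a}\}$, so its orbital factor sits at the lowest $sp(2n)$ weight inside $\V^\orb_{\sohw_\orb}$. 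Because the $osp$ Cartan elements are purely orbital and $E$ differs from the total boson number only by the constant $\tfrac{np}{2}$, this lowest weight is realized at the minimal polynomial degree at which the $O(p)$ symmetry type $\sohw_\orb$ occurs in $\hsorb$, namely $\sum_k\sohw_\orb^k$; hence $E\ket v=\bigl(\tfrac{np}{2}+\sum_k\sohw_\orb^k\bigr)\ket v$.

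Equating the two eigenvalues gives $\sum_k\sohw_\orb^k=c_2(\sohw)-c_2(\sohw_\orb)-c_2(\sohw_\spin)$. Since $\V_{(\sohw_\orb \sohw_\spin)\sohw}\neq 0$ forces ${\cal D}_\sohw\subset{\cal D}_{\sohw_\orb}\otimes{\cal D}_{\sohw_\spin}$, the difference $\tau\equiv\sohw-\sohw_\orb$ must be a weight of the spinorial representation $\sohw_\spin$, i.e.\ $\tau^k=\pm\tfrac12$ for every $k$. A direct substitution, using $\sohw_\spin^k=\tfrac12$, reduces the equality to $\sum_{k=1}^{\phalf}(1-2\tau^k)\bigl(\sohw^k+\tfrac{p-2k+1}{2}\bigr)=0$. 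Here each factor $1-2\tau^k\in\{0,2\}$ is non-negative, while each bracket is strictly positive because $\sohw^k\geq\tfrac12$ and $\tfrac{p-2k+1}{2}\geq\tfrac12$ for $1\leq k\leq\phalf$; hence every term vanishes, forcing $\tau^k=\tfrac12$ for all $k$, that is $\sohw_\orb=\sohw-\sohw_\spin$. The main obstacle is the $E$-computation: one must justify carefully that the $osp$ lowest weight vector is pinned to the minimal-degree ($sp(2n)$ lowest weight) layer of $\V^\orb_{\sohw_\orb}$ and that this degree equals $\sum_k\sohw_\orb^k$ — this is precisely where the $(Sp(2n,\R),O(p))$ dual-pair structure of the orbital Fock space enters, together with the padding (\ref{nlessq}) when $n<\phalf$.
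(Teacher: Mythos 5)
Your argument is correct and follows essentially the same route as the paper: both start from $\ket v\in V_0$ and Lemma \ref{lemmaQ}, evaluate $Q$ via the spin--orbit Casimir form (\ref{QasLS}), identify the $E$-eigenvalue as $\tfrac{np}{2}+\sum_k\sohw_\orb^k$, and close with the same positivity argument (your identity $\sum_k(1-2\tau^k)(\sohw^k+\rho^k+\tfrac12)=0$ is exactly the paper's $\left<\sow_\Delta,2\sohw+2\rho+\sow_\Delta\right>=0$ with $\sow_\Delta^k=\tfrac12-\tau^k\in\{0,1\}$). The only difference is that you justify the $E$-eigenvalue through the $(sp(2n,\R),O(p))$ dual-pair/harmonics argument, where the paper simply evaluates $E$ on an orbital highest-weight term; both are adequate.
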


\begin{proof} The lowest weight vector belongs to the vacuum subspace $V_0$, and thus, as a consequence of lemma (\ref{lemmaQ}):
\begin{equation} Q \ket{(\osplw, \osplw), (\sohw, \sovec), \mullab{(\osplw, \sohw)} } = E \ket{(\osplw, \osplw), (\sohw, \sovec), \mullab{(\osplw, \sohw)} }. \end{equation}
Using the known expression for Casimir operator eigenvalue as a function of highest weight, from (\ref{QasLS}) we obtain:
\begin{eqnarray}   Q \ket{(\osplw, \osplw), (\sohw, \sovec), \mullab{(\osplw, \sohw)} } & = & \Big(\frac{np}{2} + \left<\sohw, \sohw\right> - \left<\sohw_\orb, \sohw_\orb \right> - \left<\sohw_\spin, \sohw_\spin \right> + \\
& &  2\left<\rho, \sohw -\sohw_\orb - \sohw_\spin \right>\Big)\ket{(\osplw, \osplw), (\sohw, \sovec), \mullab{(\osplw, \sohw)} } ,  \nonumber \end{eqnarray}
where $\rho$ is half sum of positive roots of $so(p)$. Let us write $\sohw =  \sohw_\orb + \sohw_\spin - \sow_\Delta$. Taking into account that $\left<\sohw_\orb, \sohw_\spin \right> = \frac 12 \sum_k \sohw_\orb^k$, left hand side becomes:
\begin{eqnarray}   Q \ket{(\osplw, \osplw), (\sohw, \sovec), \mullab{(\osplw, \sohw)} } & = &  \Big(\frac{np}{2} + \sum_k \sohw_\orb^k - \left<\sow_\Delta, 2 \sohw + 2 \rho + \sow_\Delta \right>\Big)\ket{(\osplw, \osplw), (\sohw, \sovec), \mullab{(\osplw, \sohw)} } .  \nonumber \end{eqnarray} 
Using (\ref{CreationOperatorsInA},\ref{AnnihilationOperatorsInA}) we can rewrite the right-hand side as:
\begin{eqnarray}   E \ket{(\osplw, \osplw), (\sohw, \sovec), \mullab{(\osplw, \sohw)} } & = & \Big(\frac{np}{2} + (\sum_{\alpha, k} \ftAag{\alpha}{k}{+}\ftAg{\alpha}{k}{+} + \ftAag{\alpha}{k}{-}\ftAg{\alpha}{k}{-}) +  \nonumber \\
& &  \epsilon (\sum_{\alpha, k} \ftoag{\alpha}{p}\ftog{\alpha}{p})\Big)\ket{(\osplw, \osplw), (\sohw, \sovec), \mullab{(\osplw, \sohw)} } ,  \label{righthandside}\end{eqnarray}
where $\epsilon = p \mod 2$. Operator $E$ commutes not only with the full gauge group $G$, but also with its orbital part $G^\orb$ alone, and with orbital inversion operators $I_\orb^a$ (\ref{InversionOperators}). This fact can be used to evaluate expression (\ref{righthandside}) on some term with the highest weight w.r.t\ orbital gauge group and prove:
\begin{eqnarray}   E \ket{(\osplw, \osplw), (\sohw, \sovec), \mullab{(\osplw, \sohw)} } & = & \Big(\frac{np}{2} + \sum_k \sohw_\orb^k \Big)\ket{(\osplw, \osplw), (\sohw, \sovec), \mullab{(\osplw, \sohw)} } .  \label{righthandside2}\end{eqnarray}
Equating left and right sides, we obtain:
\begin{equation} \left<\sow_\Delta, 2 \sohw + 2 \rho + \sow_\Delta \right> = 0. \end{equation}
Since $\sow_\Delta^k$ can take only values $0$ and $1$ and $(2 \sohw + 2 \rho + \sow_\Delta)^k \geq 1$ this is enough to conclude that $\sow_\Delta = 0$. \qed
\end{proof}

The following lemma is the remaining step necessary to complete the proof of theorem (\ref{Th:main theorem}).

\begin{lemma} \label{Lm: vector form} The vector $\ket{(\osplw, \osplw), (\sohw, \sohw), \mullab{(\osplw, \sohw)} } \in \hs$ that is the lowest weight vector of $osp(1|2n)$ positive energy UIR $\osplw$ and the highest weight vector of the gauge group UIR $\sohw$ exists if and only if signatures $\ospsign$ and $\sosign$ (\ref{ospSignature}, \ref{soSignature}) satisfy:
\begin{equation} \sosign_k = \ospsigs_{n-k}, \label{Nbijection} \end{equation}
where $\ospsigs_0 \equiv d - p/2$ and it is implied that $M_k = 0, k > q$ and $\ospsigs_\alpha=0, \alpha < 0$.
In that case this vector has the following explicit form (up to multiplicative constant) in the basis (\ref{HSBasis}):
\begin{eqnarray} & \ket{(\osplw, \osplw), (\sohw, \sohw), \mullab{(\osplw, \sohw)}} =\Big(\ftAag{n}{1}{+}\Big)^{\ospsigs_{n-1}}
\Big(\ftAag{n}{1}{+}\ftAag{n-1}{2}{+} - \ftAag{n}{2}{+}\ftAag{n-1}{1}{+}\Big)^{\ospsigs_{n-2}} \cdots & \nonumber \\
&  \cdot \Big(\displaystyle \sum_{k_1, k_2, ... k_n = 1}^{\min(n, \phalf)} \varepsilon_{k_1 k_2 ... k_n} \ftAag{n}{k_1}{+}\ftAag{n-1}{k_2}{+} \cdots \ftAag{1}{k_n}{+} \Big)^{\ospsigs_0} \gvac{} \otimes \omega(\ts \frac 12, \frac 12, \dots, \frac 12) \label{lwhwVectorForm}. &\end{eqnarray}
\end{lemma}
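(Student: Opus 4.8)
The plan is to check directly that the vector displayed in (\ref{lwhwVectorForm}) satisfies every defining property of a simultaneous $osp(1|2n)$ lowest weight and gauge highest weight vector, to read off its two weights by counting indices (which produces the relation (\ref{Nbijection})), and finally to argue that the joint conditions have a one-dimensional solution space, so that the displayed vector is the only one up to scale and its existence is equivalent to (\ref{Nbijection}). First I would invoke the preceding lemma: such a vector lies in $\V_{(\sohw_\orb \sohw_\spin)\sohw}$ with $\sohw_\orb = \sohw - \sohw_\spin$ and $\sohw_\spin = (\frac12, \dots, \frac12)$. Because the gauge generators act on $\hs = \hsorb \otimes \hsspin$ as $G^{ab} = G^{ab}_\orb + G^{ab}_\spin$, the Cartan (top) component $\sohw = \sohw_\orb + \sohw_\spin$ of the tensor product is spanned by vectors $v \otimes \omega(\frac12, \dots, \frac12)$ with $v \in \hsorb$ a gauge highest weight vector of weight $\sohw_\orb$. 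Thus the spin factor is forced to be $\omega(\frac12, \dots, \frac12)$ and the whole problem descends to choosing the orbital factor $v$.

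Next I would rewrite the $osp$ lowest weight conditions as conditions on $v$. Lowest weight means $\fto{\alpha}\ket{v} = 0$ for all $\alpha$ together with the $sp(2n)$ lowest weight conditions, which on $V_0$ reduce to the $gl(n)$ conditions $\{\fto{\alpha}, \ftoa{\beta}\}\ket{v} = 0$ for $\alpha < \beta$. Expanding $\fto{\alpha}$ through (\ref{AnnihilationOperatorsInA}) and using that $\ftAg{\alpha}{k}{-}$ and $\ftog{\alpha}{p}$ commute through the "$+$" creation operators and annihilate $\gvac{}$, while $\epm{k}{+}\omega(\frac12, \dots, \frac12) = 0$ (raising a component to $\frac32 > \frac12$), I get that $\fto{\alpha}$ annihilates the factorized vector automatically. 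Using (\ref{EvenOperators}) and the $\pm$ basis, the remaining condition $\{\fto{\alpha}, \ftoa{\beta}\}\ket{v} = 0$ collapses to $\sum_k \ftAag{\beta}{k}{+}\ftAg{\alpha}{k}{+}\, v = 0$ for $\alpha < \beta$. In parallel, after discarding the spin terms and the manifestly annihilating "$-$" and gauge-index-$p$ pieces, the gauge highest weight condition on $v$ becomes $\sum_\alpha \ftAag{\alpha}{k}{+}\ftAg{\alpha}{l}{+}\, v = 0$ for $k < l$.

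The core of the verification is the antisymmetry of the determinantal building blocks in (\ref{lwhwVectorForm}): each factor is a determinant whose rows carry orbital labels $n, n-1, \dots$ and whose columns carry gauge labels $1, 2, \dots$. The operator $\sum_\alpha \ftAag{\alpha}{k}{+}\ftAg{\alpha}{l}{+}$ deletes a column of gauge index $l$ and reinserts one of index $k < l$, producing a repeated column and hence zero; the operator $\sum_k \ftAag{\beta}{k}{+}\ftAg{\alpha}{k}{+}$ turns an orbital row $\alpha$ into a row $\beta > \alpha$, and since the rows present in each factor are the top ones, the presence of $\alpha$ forces the presence of $\beta$, giving a repeated row and again zero. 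As the annihilation operators act as derivations, these vanishings pass to products and powers by the Leibniz rule, so (\ref{lwhwVectorForm}) is indeed a joint lowest/highest weight vector (its non-vanishing being clear from its extremal leading monomial). Reading off the weights is then pure counting: the $j$-th factor carries $so(p)$ weight $\ospsigs_{n-j}(\delta_1 + \dots + \delta_j)$, giving $\sohw_\orb^k = \sum_{j=k}^{n}\ospsigs_{n-j}$ and therefore $\sosign_k = \sohw_\orb^k - \sohw_\orb^{k+1} = \ospsigs_{n-k}$, while the $osp$ Cartan eigenvalue $\osplw_\beta = \frac p2 + \sum_{m=0}^{\beta-1}\ospsigs_m$ (the summand $\frac p2$ coming from the normal ordering in $\frac12\{\ftoa{\alpha}, \fto{\alpha}\}$) gives in particular $d = \osplw_1 = \frac p2 + \ospsigs_0$, i.e.\ $\ospsigs_0 = d - \frac p2$ --- exactly (\ref{Nbijection}).

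Finally I would establish uniqueness and the "only if" direction. With the spin factor fixed, uniqueness amounts to: among gauge highest weight vectors of weight $\sohw_\orb$ in $\hsorb$, the $gl(n)$ conditions $\sum_k \ftAag{\beta}{k}{+}\ftAg{\alpha}{k}{+}\, v = 0$ select a single ray. This is the multiplicity-one property of the reductive dual pair $(Sp(2n, \R), O(p))$ acting on the $np$ orbital oscillators: the even superalgebra operators (\ref{EvenOperators}) generate $sp(2n, \R)$, the orbital gauge operators generate $O(p)$, and $\hsorb$ decomposes multiplicity-freely, so a vector extremal for both factors is unique up to scale, its two labels being linked by the theta-correspondence --- precisely (\ref{Nbijection}). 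Hence any joint vector coincides with (\ref{lwhwVectorForm}) and inherits the signature relation, while conversely the construction is non-vanishing exactly when the exponents are admissible, i.e.\ when $\sosign_k$ (equivalently $\ospsigs_{n-k}$) vanishes for $k > \phalf$, consistent with (\ref{nlessq}). I expect this uniqueness/necessity step to be the main obstacle: the explicit verification and the weight bookkeeping are routine, but showing that no \emph{other} orbital highest weight vector survives the $gl(n)$ conditions is where the genuine content --- the dual-pair multiplicity-freeness, or an equivalent direct determinant argument --- resides.
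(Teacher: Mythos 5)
Your proposal is correct and follows the same skeleton as the paper's proof: both use the preceding lemma to place the vector in the Cartan component $\sohw=\sohw_\orb+\sohw_\spin$, factor off the spin part $\omega(\ts\frac12,\dots,\frac12)$, use $\fto{\alpha}\ket{v}=0$ together with (\ref{AnnihilationOperatorsInA}) to exclude $\doubleGreen^\dagger_{-}$ and $\ftoag{}{p}$ from the orbital polynomial, and then read off the two signatures from the explicit expression. The one place you genuinely diverge is the step the paper compresses into ``it is now easily verified'': you split it into (a) an explicit existence check, via the row/column antisymmetry argument showing the product of leading minors is killed by $\sum_k \ftAag{\beta}{k}{+}\ftAg{\alpha}{k}{+}$ ($\alpha<\beta$) and $\sum_\alpha \ftAag{\alpha}{k}{+}\ftAg{\alpha}{l}{+}$ ($k<l$), and (b) a uniqueness/``only if'' step outsourced to the multiplicity-freeness of the $(Sp(2n,\R),O(p))$ Howe dual pair on the $np$ orbital oscillators. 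Part (a) is more detailed than what the paper writes; part (b) imports a classical theorem that the paper is in effect re-deriving in this special case, and a self-contained version would replace it by the elementary fact that a $gl(n)\times gl(\phalf)$ joint extremal polynomial in the $n\times\phalf$ matrix of variables $\ftAag{\alpha}{k}{+}$ of prescribed weight is necessarily a product of principal minors --- the ``direct determinant argument'' you yourself name as the alternative. You correctly locate the real content of the lemma in exactly that step; either route closes it.
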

\begin{proof}
As the consequence of the previous lemma, vector $\ket{(\osplw, \osplw), (\sohw, \sohw), \mullab{(\osplw, \sohw)} }$ must belong to the subspace $\V_{(\sohw_\orb \sohw_\spin)\sohw_\orb + \sohw_\spin} $. Such highest weight vector can be written as tensor product of a gauge group highest weight vector in space $\hsorb$ and the highest weight vector $\omega(\ts \frac 12, \frac 12, \dots, \frac 12)$ of the space $\hsspin$:
\begin{equation} \ket{(\osplw, \osplw), (\sohw, \sohw), \mullab{(\osplw, \sohw)} } = {\cal P}^{hw}(\doubleGreen^\dagger_{+}, \doubleGreen^\dagger_{-}, \ftoag{}{p})\gvac{} \otimes \omega(\ts \frac 12, \frac 12, \dots, \frac 12). \end{equation}
As a consequence of $\fto{\alpha} \ket{(\osplw, \osplw), (\sohw, \sohw), \mullab{(\osplw, \sohw)} } = 0$ and the form (\ref{AnnihilationOperatorsInA}) of $\fto{\alpha}$ it turns out that the polynomial ${\cal P}^{hw}$ above can contain neither $ \doubleGreen^\dagger_{-}$ nor $\ftoag{}{p}$. Thus:
\begin{equation} \ket{(\osplw, \osplw), (\sohw, \sohw), \mullab{(\osplw, \sohw)} } = {\cal P}^{hw}(\doubleGreen^\dagger_{+})\gvac{} \otimes \omega(\ts \frac 12, \frac 12, \dots, \frac 12). \end{equation}
It is now easily verified that ${\cal P}^{hw}$ must be of the form (\ref{lwhwVectorForm}) to produce vector that is of the lowest $osp$ weight and of the highest gauge group weight. The corresponding weights and UIR signatures are then directly inferred. \qed
\end{proof}

Note that the Lemma \ref{Lm: vector form} also determines whether an $osp$ representation $\ospsign$ appears or not in the decomposition of Green's ansatz of order $p$: UIR $\ospsign$ appears in the decomposition if and only if the condition (\ref{Nbijection}) can be satisfied by allowed integer values of $\sosign_k$. However, if $q$ is not sufficiently high, the first $n - q$ of the $\ospsign$ components $\ospsign_0, \ospsign_1, \dots \ospsign_{n-q-1}$ are bound to be zero.

\begin{corollary}
All (half)integer positive energy UIR's of $osp(1|2n)$ can be constructed in space $\hs$ with $p \leq 2n+1$.
\end{corollary}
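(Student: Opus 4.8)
The plan is to turn Lemma~\ref{Lm: vector form} and the remark following it into an explicit occurrence criterion, and then to exhibit, for every admissible UIR, an order $p\le 2n+1$ satisfying it. Writing $q=\phalf=[p/2]$ and $\ospsigs_0\equiv d-p/2$, that remark says the representation $\ospsign=[d;\ospsigs_1,\dots,\ospsigs_{n-1}]$ occurs in the order-$p$ ansatz precisely when: (i) $d-p/2$ is a nonnegative integer, so that $\sosign_n=\ospsigs_0$ is an admissible gauge label; and (ii) if $q<n$, then the leading components $\ospsigs_0,\ospsigs_1,\dots,\ospsigs_{n-q-1}$ all vanish (for $q<n$ the antisymmetrisation over only $q$ values of the index in (\ref{lwhwVectorForm}) annihilates every higher factor). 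When $q\ge n$ condition (ii) is empty and only (i) survives. With this in hand, the corollary reduces to choosing a suitable $p\le 2n+1$ for an arbitrary (half)integer positive-energy UIR, and I would split on the size of $d$.

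For $d\ge n$ I would take $p=2n$ when $d$ is integer and $p=2n+1$ when $d$ is half-integer. In both cases $q=n$, so (ii) is vacuous, while $\ospsigs_0=d-p/2$ equals $d-n$ respectively $d-n-\tfrac12$, a nonnegative integer; hence (i) holds and the UIR occurs, with the gauge labels fixed by $\sosign_k=\ospsigs_{n-k}$, $k=1,\dots,n$. For $d<n$ I would instead take $p=2d$ (even for integer $d$, odd for half-integer $d$); then $p<2n\le 2n+1$, $q=[d]<n$, and $\ospsigs_0=d-p/2=0$, so (i) is automatic. What remains is condition (ii), namely $\ospsigs_1=\dots=\ospsigs_{n-q-1}=0$, equivalently $\osplw_1=\dots=\osplw_{n-[d]}$.

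I expect verifying (ii) in this small-$d$ regime to be the crux. It is not a formal consequence of the ansatz but a statement about the unitarity of lowest-weight $osp(1|2n)$ modules: a genuinely positive-energy UIR with (half)integer $d<n$ must have its first $n-[d]$ lowest-weight components equal. I would import this from the reduction-point/unitarity classification in \cite{DobrevZhang} --- below the first reduction point the only admissible (half)integer points occur for correspondingly degenerate signatures, and this degeneracy is exactly the vanishing demanded by (ii). With that input the two cases together realise every integer and half-integer positive-energy UIR at some $p\le 2n+1$, proving the corollary.

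Finally I would note, for context rather than necessity, that $p=2n+1$ is the natural ceiling: for $q=[p/2]>n$ the surplus gauge labels $\sosign_{n+1},\dots,\sosign_q$ are forced to zero by $\ospsigs_{n-k}=0$ for $k>n$, so orders $p>2n+1$ merely reproduce UIRs already obtained at $q=n$; the two residue classes $p=2n$ and $p=2n+1$ between them supply all integer and all half-integer values of $d$.
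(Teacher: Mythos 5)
Your proposal is correct and follows essentially the same route as the paper: use the bijection (\ref{Nbijection}) to realise every UIR with $d\ge n$ at $p=2n$ (integer $d$) or $p=2n+1$ (half-integer $d$), and defer the $d<n$ case to the classification of positive energy UIR's, whose forced degeneracy $\osplw_1=\dots=\osplw_{n-[d]}$ is exactly what makes a smaller $p$ suffice. You are somewhat more explicit than the paper (which simply cites the classification list for $d<n$) in pinning down the occurrence criterion and the choice $p=2d$, but the underlying argument is the same.
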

\begin{proof}
Due to relation (\ref{Nbijection}), values $\ospsigs_0, \ospsigs_1, \dots \ospsigs_{n-1}$ can be arbitrary integers when $\phalf \geq n$: choice $p = 2n$ contains integer values of $d$ UIR's while $p = 2n + 1$ contains half-integer values. That spaces $\hs$ for some $p < 2n$ also contain all UIR's with $d < n$, can be verified by checking the list of all positive energy UIR's of $osp(1|2n)$ given elsewhere \cite{ospUIRclassification}. \qed
\end{proof}

In other words, the above corollary states that no additional UIR's of $osp(1|2n)$ appear when considering $p > 2n+1$, i.e.\ it is sufficient to consider only $p \leq 2n+1$.

The proof of the {\bf Theorem \ref{Th:main theorem}} now follows from the Lemma \ref{Lm: vector form}.
\begin{proof}
Lemma \ref{Lm: vector form} gives the explicit form of the vector that is the lowest weight vector of $osp(1|2n)$ positive energy UIR $\osplw$ and the highest weight vector of the gauge group UIR $\sohw$, when such vector exists. It follows that there can be at most one such vector. Therefore, the multiplicity $\mulmax{(\osplw, \sohw)}$ can be either 1 or 0. The relation between $\osplw$ and $\sohw$ is given by (\ref{Nbijection}) and it defines bijection $\cal N$. \qed
\end{proof}

Finally, let us consider an $osp$ representation ${\cal D}_\osplw$ with the lowest weight vector $\ket{(\osplw, \osplw), (\sohw, \sovec)}$ (now we omitted the multiplicity label as we have proved it is unnecessary). The representation space ${\cal U}(\Gosp^+) \ket{(\osplw, \osplw), (\sohw, \sovec)}$ contains a number of lowest weight vectors of even subalgebra $sp(2n)$, corresponding to decomposition of the $osp(1|2n)$ representation to $sp(2n)$ subrepresentations. It also holds: ${\cal U}(\Gosp^+) \ket{(\osplw, \osplw), (\sohw, \sovec)} \subset V_\sohw$.

We state the following theorem on the decomposition to even subalgebra subrepresentations:

\begin{theorem}
Each of the subspaces $\V_{(\sohw_\orb \sohw_\spin)\sohw}$, $\sohw_\orb \in {\cal M}_\orb(\sohw)$ of the space $V_\sohw$ (\ref{VmuDecomposition}) contains at least one of the $sp(2n)$ subrepresentations in the decomposition of ${\cal U}(\Gosp^+) \ket{(\osplw, \osplw), (\sohw, \sovec)}$.
\end{theorem}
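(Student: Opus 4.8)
The plan is to deduce the statement directly from the $osp(1|2n)\times G$ tensor structure that Theorem~\ref{Th:main theorem} forces on each gauge-isotypic subspace. By part~1 of that theorem the decomposition of $\hs$ is multiplicity free, and by the bijection of part~2 the whole isotypic component $\V_\sohw$ carries exactly one $osp(1|2n)$ UIR, namely $\osplw = {\cal N}^{-1}(\sohw)$. Hence, as a module for the commuting pair $osp(1|2n)\times G$, it factorizes as $\V_\sohw \cong {\cal D}^{osp}_\osplw \otimes {\cal D}^{gauge}_\sohw$, with the $osp$ action on the first and the gauge action on the second factor. Under this identification $\ket{(\osplw,\osplw),(\sohw,\sovec)}$ is the tensor product of the $osp$ lowest weight vector with the gauge weight vector $\ket{w_\sovec}$, and, since $\Gosp^+$ commutes with $G$, one has ${\cal U}(\Gosp^+)\ket{(\osplw,\osplw),(\sohw,\sovec)} = {\cal D}^{osp}_\osplw\otimes\ket{w_\sovec}$.

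Next I would show that each layer of (\ref{VmuDecomposition}) is itself a tensor factor of the same splitting. Each $\V_{(\sohw_\orb\sohw_\spin)\sohw}$ is invariant under the full gauge group $G$ (it is $\sohw$-isotypic by construction) and, by (\ref{EvenOperators}), also under $sp(2n)$. Since $G$ acts irreducibly on the second factor of $\V_\sohw$, every $G$-invariant subspace has the form $W\otimes{\cal D}^{gauge}_\sohw$; applying this to the layer and using $sp(2n)$-invariance gives $\V_{(\sohw_\orb\sohw_\spin)\sohw} = W_{\sohw_\orb}\otimes{\cal D}^{gauge}_\sohw$ for an $sp(2n)$-submodule $W_{\sohw_\orb}\subseteq{\cal D}^{osp}_\osplw$. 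Summing over the layers yields $\bigoplus_{\sohw_\orb\in{\cal M}_\orb(\sohw)} W_{\sohw_\orb} = {\cal D}^{osp}_\osplw$, and $W_{\sohw_\orb}\neq 0$ precisely for $\sohw_\orb\in{\cal M}_\orb(\sohw)$, since a layer with $\sohw_\orb\notin{\cal M}_\orb(\sohw)$ is empty by definition of ${\cal M}_\orb$.

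The theorem then follows by intersecting the $osp$ copy with a layer: ${\cal D}^{osp}_\osplw\otimes\ket{w_\sovec}\,\cap\,\V_{(\sohw_\orb\sohw_\spin)\sohw} = W_{\sohw_\orb}\otimes\ket{w_\sovec}$, which for every $\sohw_\orb\in{\cal M}_\orb(\sohw)$ is a nonzero $sp(2n)$-submodule of ${\cal U}(\Gosp^+)\ket{(\osplw,\osplw),(\sohw,\sovec)}$ lying inside $\V_{(\sohw_\orb\sohw_\spin)\sohw}$. Because a positive energy UIR restricts to $sp(2n)$ as a discrete sum of lowest weight $sp(2n)$ UIRs — equivalently, any nonzero $sp(2n)$-invariant subspace contains a vector of minimal conformal energy $E$ (\ref{ConformalEnergy}), which is an $sp(2n)$ lowest weight vector — this submodule contains at least one complete $sp(2n)$ subrepresentation occurring in the decomposition. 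I would also note that Lemma~\ref{lemmaQ} and the spin-orbit identity (\ref{QasLS}) supply the quantitative picture: $Q$ is $sp(2n)$-invariant and acts on $W_{\sohw_\orb}$ through the Casimir value of $\sohw_\orb$, so its eigenvalue records which layer a given $sp(2n)$ component occupies.

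The main obstacle is the first step — guaranteeing the clean $osp(1|2n)\times G$ factorization of $\V_\sohw$, i.e.\ that no second $osp$ UIR hides in $\V_\sohw$ reaching the gauge content $\sohw$ from a different lowest weight. This is exactly what bijectivity in Theorem~\ref{Th:main theorem} rules out, together with the earlier lemma pinning every $osp$ lowest weight vector of gauge type $\sohw$ to the single layer $\sohw_\orb = \sohw - \sohw_\spin$; once the factorization is in place the remaining steps are purely structural.
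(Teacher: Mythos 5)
Your argument is correct and follows essentially the same route as the paper: the paper's entire proof is the one-line observation that each layer $\V_{(\sohw_\orb \sohw_\spin)\sohw}$ is closed under the action of the even subalgebra, and your proposal supplies exactly the supporting structure that this observation tacitly relies on (the multiplicity-free factorization $\V_\sohw \cong {\cal D}^{osp}_\osplw \otimes {\cal D}^{gauge}_\sohw$ coming from Theorem \ref{Th:main theorem}, the nonemptiness of each layer for $\sohw_\orb \in {\cal M}_\orb(\sohw)$, and the extraction of a minimal-conformal-energy $sp(2n)$ lowest weight vector from a nonzero $sp(2n)$-invariant subspace). No changes needed.
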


\begin{proof} The theorem is a direct consequence of the fact that that each of the subspaces $\V_{(\sohw_\orb \sohw_\spin)\sohw}$ is closed w.r.t.\ action of the even subalgebra. \qed
\end{proof}

Knowing the gauge transformation properties, which are determined by this theorem, drastically simplifies the problem of finding the lowest weight vectors of $sp(2n)$ subrepresentations.

\section{Remarks}

The established properties of the gauge symmetry action in the $osp$ representation space (\ref{HilbertSpace}) essentially mean that the orthogonal group for $osp(1|2n, \R)$ indeed plays the role that the symmetric group has in the case of the unitary group $U(n)$. Each subspace $V_{(\sosign, \sovec)}$  of vectors transforming as $(\sosign, \sovec)$ w.r.t.\ gauge group is the irreducible representation space of $osp$ UIR $\ospsign = {\cal N}^{-1} (\sosign) $. This parallels the case of tensor product of $U(n)$ defining representations, where each subspace of definite permutation symmetry properties is irreducible w.r.t.\ $U(n)$ action. Either gauge group UIR label $\sosign$ or the $osp$ UIR label $\ospsign$ can be used for labeling of both gauge and $osp$ UIR's, just as any Young diagram labels both UIR's of permutation group and of $U(n)$. As an additional feature, we have shown that gauge spin-orbit coupling properties determine decomposition of $osp(1|2n, \R)$ UIR's to UIR's of the even subalgebra $sp(2n, \R)$.

A curious consequence of the obtained results is that the tensor products of up to $2n+1$ oscillatory representations already contain all UIR's that are obtainable in this way. This is in contrast to $U(n)$ analogy, since we need arbitrary number of boxes in a Young diagram to construct arbitrary $U(n)$ representation. In the parastatistics terminology, this means that considering the order of parastatistics (defined as the number of factor spaces) $p> 2n+1$ does not introduce any new representation of parabose algebra. This also holds when considering the subclass of the unique vacuum representations, and holds already for $n=1$ (however, the vacuum state in general has the form given by (\ref{lwhwVectorForm}) ). Physically, in the context of $osp$ space-time symmetry, this means that all particles belonging to (half)integer energy representations can be seen as composed of up to $2n+1$ particles of the simplest type.

We also note that, in practical applications, a gauge fixing that removes multiplicity of $osp$ representations can be easily introduced. For example, the "highest weight gauge" condition would be imposing a constraint $X \ket{v} = 0, \forall X \in \Gso^+$. The subspace of vectors satisfying such condition no longer possesses any multiplicity of $osp$ UIR's.

We constrain ourselves here only to a short comment on the relation of this covariant form of the Green's ansatz with its "non covariant" form appearing in the Green's seminal paper \cite{Green}. The relation between covariant and non-covariant Green operators is simple and invertible: $\ftogo{\alpha}{a} = \ftog{\alpha}{a} \ e^{a}$ and $\ftog{\alpha}{a} = \ftogo{\alpha}{a} \ e^{a}$. Using this relation, any vector written in the noncovariant ansatz, i.e.\ of the form ${\cal U}(\ftoago{}{})\ket{0}$ can be easily rewritten in the covariant form, and then its $osp$ properties can be inferred considering its transformation w.r.t.\ the gauge group. This mapping is not one to one, since the Fock vacuum of the noncovariant ansatz is replaced by $\ket{0}\otimes \omega $ where $\omega$ is arbitrary vector from the Clifford space. All types of $osp$ UIR's existing in the covariant ansatz of order $p$ also exist in its non-covariant counterpart. The relation of the two forms of the Green's ansatz can be explored in more detail, but one of the points of this paper is that, in physical applications, there is no particular need to consider the non-covariant version at all: the covariant form (\ref{GGAnsatz}) has much better mathematical properties.

Finally, we express our belief that the approach exposed here can be also generalized to the case of some other (super)algebras.

\section*{Acknowledgments}
\label{ack}

This work was financed by the Serbian Ministry of Science and
Technological Development under grant number OI 171031.


\begin{thebibliography}{B-B} %%%%%
\medskip
\begin{footnotesize} % do not change this line !!!



\bibitem{ExamplesFronsdal} Fronsdal, C.: Massless particles, orthosymplectic symmetry and another type of Kaluza-Klein theory, in "Essays on Supersymmetry", Reidel, 1986.

\bibitem{ExamplesBandos2000} Bandos, I. A., Lukierski, J., Preitschopf,  C. and Sorokin, D.P.: OSp supergroup manifolds, superparticles and supertwistors, Phys. Rev. D 61 (2000) 065009.    % osp u naslovu

\bibitem{ExamplesBandos2001} Bandos, I., Azc\'arraga, J.A., Izquierdo, J.M. and Lukierski J.: BPS states in M-theory and twistorial constituents,  Phys.Rev.Lett. 86 (2001) 4451-4454. % osp u apstraktu
    
\bibitem{ExamplesLukierski} Lukierski J., Toppan, F.: Generalized space–time supersymmetries, division algebras and octonionic M-theory, {\em Phys.Lett. B} {\bf 539} (2002) 266. %osp u apstraktu, uz druge algebre
 
\bibitem{ExamplesVasiliev2002} Vasiliev, M. A.: Conformal higher spin symmetries of 4D massless supermultiplets and
OSp(L, 2M) invariant equations in generalized (super)space, {\em Phys. Rev. D} {\bf 66} (2002) 066006.
 
\bibitem{ExamplesPlyshchay} Plyushchay, M., Sorokin, D. and Tsulaia, M.: Higher spins from tensorial charges and
OSp(N$|$2n) symmetry, {\em JHEP} {\bf 04} (2003) 013.

\bibitem{ExamplesVasiliev2008} Vasiliev, M. A.: On conformal, SL(4,R) and Sp(8,R) symmetries of 4D massless fields, {\em Nucl.Phys.} {\bf B793} (2008) 469.
 
\bibitem{SalomFDP} Salom, I.: Parabose Extension of Conformal Supersymmetry, {\em Fortschritte der Physik} {\bf 56} (2008) 505.

\bibitem{ExamplesFedoruk2013} Fedoruk S. and Lukierski J.: New spinorial particle model in tensorial space-time and interacting higher spin fields, {\em JHEP} {\bf 02} (2013) 01-21.

\bibitem{Green} Green, H. S.: A generalized method of field quantization, {\em Phys. Rev.} {\bf 90}, 270 (1952).

\bibitem{OSPVeza} Ganchev, A. Ch. and Palev, T.D.: A Lie superalgebraic interpretation of the para-Bose statistics,  {\em J. Math. Phys.} {\bf 21}, (1980) 797-799.

\bibitem{GreensAnsatzAsTensorProduct} Palev, T.D.: Algebraic structure of Green's ansatz and its q-deformed analogue, J. Phys. A: Math. Gen. 27, 7373-7387 (1994).

\bibitem{GreenbergGauge} Greenberg, O. W. and  Macrae, K. I.: Locally gauge-invarian formulation of parastatistics,  {\it Nucl. Phys.} {\bf B 219} (1983) 358.

\bibitem{GreenbergCompleteness}  Greenberg, O. W. and  Messiah, A.M.L.: Selection rules for parafields and the absence of para particles in nature, {\it Phys. Rev.} {\bf 138} (1965) 1155.

\bibitem{DobrevZhang} Dobrev, V. K. and Zhang, R. B.: Positive Energy Unitary Irreducible Representations
of the Superalgebras osp(1$|$2n, R), {\em Phys. Atomic Nuclei} {\bf 68} (2005) 1660.

\bibitem{ospUIRclassification} V. K. Dobrev and I. Salom, {\em in preparation}.

\end{footnotesize} % do not change this line !!
\end{thebibliography}
\end{document}